\title{\bf
Preserving Privacy of the Influence Structure in Friedkin-Johnsen Systems
}
\author{Jack Liell-Cock, Ian R. Manchester$^*$ and Guodong Shi
\thanks{I. R. Manchester and G. Shi are with Australian Centre for Field Robotics, The University of Sydney, NSW 2006, Australia. (email: ian.manchester@sydney.edu.au, guodong.shi@sydney.edu.au) }%
}
\date{}
\newtheorem{theorem}{Theorem}
\newtheorem{proposition}{Proposition}
\newtheorem{lemma}{Lemma}
\theoremstyle{definition}
\newtheorem{definition}{Definition}
\newtheorem{example}{Example}
\newtheorem{claim}{Claim}
\newcommand{\diag}{\mathrm{diag}}
\newcommand{\N}{\mathrm{N}}
\begin{document}

\maketitle

\begin{abstract}

The nature of information sharing in common distributed consensus algorithms permits network eavesdroppers to expose sensitive system information. An important parameter within distributed systems, often neglected under the scope of privacy preservation, is the influence structure -- the weighting each agent places on the sources of their opinion pool. This paper proposes a local (i.e. computed individually by each agent), time varying mask to prevent the discovery of the influence structure by an external observer with access to the entire information flow, network knowledge and mask formulation. This result is produced through the auxiliary demonstration of the preserved stability of a Friedkin-Johnsen system under a set of generalised conditions. The mask is developed under these constraints and involves perturbing the influence structure by decaying pseudonoise. This paper provides the information matrix of the best influence structure estimate by an eavesdropper lacking a priori knowledge and uses stochastic simulations to analyse the performance of the mask against ranging system hyperparameters.

\end{abstract}

\section{Introduction}

In distributed systems, many independent, sparsely connected agents work together to achieve a common goal. Due to advancements in communication and automation technology, many systems have begun to adopt a decentralised approach such as intelligent transportation networks \cite{DeWit2015}, smart grids \cite{Hill2012} and the Internet of Things \cite{Chang2015}. In large distributed networks, local controllers do not guarantee the optimisation or stability of the system, while centralised controllers cannot be suitably scaled or are physically impractical to implement. Thus emerged the branch of distributed control, the origin of which can be traced from the work of distributed decision-making \cite{Tsitsiklis1986} and parallel computation \cite{Lynch1996}.

A simple tool developed in distributed control theory are consensus algorithms \cite{DeGroot1973, Hegselmann2002, Boyd2006, Amelkin2017} -- local algorithms which require minimal computation, communication and synchronisation to aggregate the parameters within a distributed system. In fact, common optimisation and estimation problems such as least squares, sensor calibration, vehicle coordination and Kalman filtering can be formulated as averages of some system parameters \cite{Garin2010}, and thus can be solved using consensus algorithms.

Many consensus algorithms stem from the DeGroot model \cite{DeGroot1973}, which conveys how a set of agents may reach a consensus by opinion broadcasting. In the context of distributed systems, an opinion is an agent's evaluation of a parameter within the system. At each iteration, the opinions are broadcast, and each agent updates their opinion to a weighted aggregate of their local neighbours' ones. The weightings that each agent places on the opinions of other agents is called the influence structure.

An extension of the DeGroot model which employs heterogeneity is the Friedkin-Johnsen (FJ) model \cite{FJ2011}. This scheme was originally developed to model the propagation of opinions through social networks, because positively interacting networks can still persistently disagree and cluster if their agents are diverse \cite{Xia2011, Aeyels2011}. This model generalises the DeGroot model with each agent retaining a constant prejudice towards an external bias -- commonly chosen as the agent's initial opinion. The weighting the agents place on their biased opinions versus the opinions of other agents is called their susceptibility. A low susceptibility indicates the agent is stubborn in their original opinion whereas a high susceptibility means the agent is more gullible to the opinions of others. We define a distributed system which uses the FJ model to reach a consensus as an FJ system.

While consensus algorithms enjoy system scalability, these approaches are vulnerable to network eavesdroppers due to their information sharing nature. Considerable results have been established to preserve the privacy of the initial opinions of the agents \cite{Duan2015, Mo2017, Liu2018, Altafini} by injecting noise or randomly delaying the broadcasts of the true opinions. However, literature still fails to recognise the complications arising from the exposure of the influence structure by an eavesdropping agent. For example, in smart grid technology, a range of electricity producers and consumers are interconnected to form an efficient energy exchange system \cite{Hill2012}. It is likely that an electricity producer would use an influence structure to aggregate their consumer's needs and generate power accordingly. The exposure of this influence structure would unveil an unfair bargaining advantage for the consumers by indicating how much the producer relies on their energy consumption. Similar privacy concerns are prevalent in the exchange of information in the smart car industry, and the home automation industry with the Internet of Things.

The aim of this paper is to develop a mask for the influence structure of an FJ system. The term \textit{mask} was coined in \cite{Altafini} and describes a local (in the sense it can be computed on an agent by agent basis) function which induces noise  into some system parameter. The purpose of a mask is to preserve the privacy of the parameter, without affecting the limiting conditions of the system. Our mask development process originates by establishing a range of perturbations applicable to the parameters of an FJ system which don't affect the resulting aggregated opinions. The mask is then formed within these restrictions and consists of randomly offsetting the true influence structure by noise from a decaying normal distribution. The extent to which a network eavesdropper with access to the entire information flow, network knowledge and mask formulation can discover the influence structure is then carefully explored. The key results from this investigation are the information matrix produced by the eavesdropper's maximum likelihood estimate, and numerical simulations outlining the dependencies of the mask's performance with respect to FJ system hyperparameters.

The remainder of this paper is organised as follows. In Section \ref{sec:preliminaries}, we outline notation and revisit the problem formulation. In Section \ref{sec:fjstability}, we extend the stability conditions of the FJ system to allow for non-constant influence structures, susceptibilities and external biases. Section \ref{sec:mask} uses this result to develop a mask for the influence structure in an FJ system. Section \ref{sec:discoverability} precisely defines the system eavesdropper and performs maximum likelihood estimation on the influence structure from its perspective. Numerical results are also produced to aid the discussions on the mask's performance. Finally, Section \ref{sec:conclusion} ends the paper with some concluding remarks and potential future directions.

\section{Preliminaries and Problem Definition} \label{sec:preliminaries}

\subsection{Notation}

Discrete, time-varying sequences of vectors and matrices are a common theme throughout this paper. To maintain consistency across index notation, a superscript index is an index with respect to the elements (or agents) in a vector or matrix, while a subscript index is an index with respect to time. For example, given a discrete-time vector sequence of evolving opinions $(x_t)_{t\in\mathds{N}}$, the value $x^i_t$ is the opinion of the $i$-th agent at timestep $t$.

The notation $\diag(A_1,...,A_m)$ denotes the block diagonal matrix constructed using matrices $\{A_k \ | \ k=1,...,m\}$ with zeros elsewhere. A non-negative matrix that has all rows sum to 1 is called a row stochastic matrix. For any square matrix $A$, we say $A \succ 0$ if and only if $A$ is positive definite. All other vector or matrix inequalities, for example $A \geq B$, are taken elementwise for $B$ the same size as $A$, or $B$ a scalar. For matrix $Q$, $Q \in N(A)$ if its columns form an orthonormal basis for the nullspace of $A$.

Let $\|v\|_p = \sqrt[p]{\sum_{k} |v^k|^p}$ be the $l_p$-norm of the vector $v$ with $\|v\|=\|v\|_2$. We define $\mathds{1}_m$ as the vector of length $m$ with 1 as all its entries, and $I_{m}$ as the $m \times m$ identity matrix. If $m$ is not specified, the size of the vector or matrix is assumed from the context of the equation. Let $\mathcal{N}(\mu, \sigma^2)$ be the normal distribution with mean $\mu$ and variance $\sigma^2$.

Consider a set of agents $\mathcal{V}$ with communication links $\mathcal{E} \subseteq \mathcal{V} \times \mathcal{V}$ whose interactions are described by a graph $\mathcal{G} = (\mathcal{V},\mathcal{E})$. This is referred to as a network. Let the total number of agents in the network be $|\mathcal{V}| = n < \infty$. An edge is drawn $(i,j) \in \mathcal{E}$ if agent $j$ directly influences the opinion of agent $i$. A path from $i$ to $j$ in $\mathcal{G}$ is an ordered set of edges $((i,k_1), (k_1,k_2),...,(k_l,j))$. If such a path exists, we say $j$ indirectly influences $i$. The neighbourhood $\mathcal{N}^i = \{j \ | \ (i,j) \in \mathcal{E} \}$ is the set of agents which directly influence agent $i$, and the degree $d^i = |\mathcal{N}^i|$ is the size of this neighbourhood. If there exists a $\tilde{d}$ such that $d^i = \tilde{d}$ for all $i \in \mathcal{V}$, then the network or associated system is said to have degree $\tilde{d}$. Finally, $W$ is said to be a matrix adapted to the graph $\mathcal{G}$ if and only if $(i,j) \not\in \mathcal{E}$ implies $W^{ij} = 0$. The remaining elements $W^{ij}$ where $(i,j) \in \mathcal{E}$ are denoted influence elements.

\subsection{The FJ Model}

Consider a network of agents $\mathcal{G} = (\mathcal{V},\mathcal{E})$. Let each agent hold a state $x_t \in \mathds{R}^n$ which represents the opinions of the agents at timestep $t \in \mathds{N}$. The influence structure $W$ is the row stochastic matrix adapted to the network $\mathcal{G}$ which holds the relative weightings each agent places on the other agents. We call the $i$-th row of $W$ the influence structure of agent $i$. Define the diagonal matrix $0 \leq \Lambda \leq I_n$ holding the ordered susceptibilities $\lambda\in\mathds{R}^n$ of each agent as the susceptibility matrix, and its non-negative complement $\bar\Lambda = I_n-\Lambda$ as the stubbornness matrix. An agent $i\in\mathcal{V}$ is said to be oblivious if $\lambda^i = 1$.  Let $u\in \mathds{R}^n$ be the external biases for each agent. Then the opinions in an FJ system evolve as
\begin{equation}
x_{t+1} = \Lambda W x_t + \bar\Lambda u. \label{eq:FJ}
\end{equation}

The stability conditions of FJ systems are given in the following result from \cite{Parsegov2017}.

\begin{proposition} \label{prop:fj_stability}
	For every agent $i \in \mathcal{V}$, if some agent $j\in\mathcal{V}$ is not oblivious and indirectly influences $i$, the FJ system is stable with the opinions converging to
	\begin{equation}
	x_\infty = \lim\limits_{t\to\infty} x_t = (I-\Lambda W)^{-1}\bar\Lambda u.
	\end{equation}
	Moreover, this is equivalent to $\Lambda W$ being Schur stable. 
\end{proposition}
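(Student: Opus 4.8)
The plan is to recognise that \eqref{eq:FJ} is an affine linear recursion $x_{t+1} = \Lambda W x_t + \bar\Lambda u$, so its asymptotic behaviour is governed entirely by the iteration matrix $\Lambda W$. First I would record the structural facts: since $W$ is row stochastic and $0 \le \Lambda \le I_n$, the product $\Lambda W$ is non-negative with $i$-th row sum equal to $\lambda^i \le 1$; hence $\Lambda W$ is row substochastic and $\rho(\Lambda W) \le 1$ by Gershgorin's theorem (or Perron--Frobenius). Granting Schur stability $\rho(\Lambda W) < 1$ for the moment, $I - \Lambda W$ is invertible and the Neumann series gives $x_t = (\Lambda W)^t x_0 + \sum_{k=0}^{t-1}(\Lambda W)^k \bar\Lambda u \to (I-\Lambda W)^{-1}\bar\Lambda u$, which is exactly $x_\infty$; conversely, if $\rho(\Lambda W) \ge 1$ the homogeneous term $(\Lambda W)^t x_0$ fails to vanish for generic $x_0$, so convergence for all initial conditions forces Schur stability. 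Thus the entire content of the proposition reduces to the equivalence between the reachability hypothesis and $\rho(\Lambda W) < 1$.

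For the direction ``reachability fails $\Rightarrow$ not Schur stable'', I would argue by contraposition. Suppose some agent $i$ cannot reach any non-oblivious agent along a directed path in $\mathcal{G}$. Let $R \subseteq \mathcal{V}$ be the set of all agents reachable from $i$ (including $i$ itself). By hypothesis every agent in $R$ is oblivious, so $\Lambda$ restricted to $R$ is the identity; moreover $R$ is closed under out-edges, i.e. $W^{kk'} = 0$ whenever $k \in R$ and $k' \notin R$. Ordering the agents so that $R$ comes first, $\Lambda W$ becomes block triangular with leading diagonal block equal to the principal submatrix $W_R$,
\[
\Lambda W = \begin{pmatrix} W_R & 0 \\ B & C \end{pmatrix}.
\]
Because $R$ is closed and its agents are oblivious, each row of $W_R$ still sums to $1$, so $W_R$ is row stochastic with $\rho(W_R) = 1$ (right eigenvector $\mathds{1}$). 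The spectrum of a block triangular matrix is the union of the spectra of its diagonal blocks, hence $\rho(\Lambda W) \ge 1$ and the system is not Schur stable.

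For the converse direction ``reachability holds $\Rightarrow$ Schur stable'', I would invoke Perron--Frobenius together with a maximum-entry argument. Suppose for contradiction that $\rho(\Lambda W) = 1$. Non-negativity yields an eigenvector $v \ge 0$, $v \ne 0$, with $\Lambda W v = v$. Let $M = \max_i v^i > 0$ and $S = \{i : v^i = M\}$. For any $i \in S$ the eigen-equation reads $\lambda^i \sum_j W^{ij} v^j = M$; since $W^{ij} \ge 0$, $\sum_j W^{ij} = 1$ and each $v^j \le M$, the convex combination satisfies $\sum_j W^{ij} v^j \le M$, so the equality $\lambda^i \sum_j W^{ij} v^j = M$ forces both $\lambda^i = 1$ (agent $i$ is oblivious) and $v^j = M$ for every neighbour $j \in \mathcal{N}^i$. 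Thus $S$ consists only of oblivious agents and is closed under passing to neighbours, so no agent in $S$ can reach a non-oblivious agent---contradicting the hypothesis. Therefore $\rho(\Lambda W) < 1$.

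The main obstacle, and the step deserving the most care, is this converse direction: one must ensure that the maximum-entry set $S$ is genuinely invariant under the influence graph and that leakage $\lambda^i < 1$ at even a single reachable agent suffices to strictly contract the spectral radius. Technically this rests on controlling the Perron eigenvector of a \emph{reducible} non-negative matrix, where the eigenvector need not be strictly positive and the usual irreducibility shortcuts are unavailable; the maximal-set propagation argument handles the reducible block structure cleanly, but the closure of $S$ under $\mathcal{N}^i$ is exactly where the graph-theoretic hypothesis is consumed and must be checked explicitly.
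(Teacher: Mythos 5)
The paper itself contains no proof of this proposition: it is quoted as a known result from \cite{Parsegov2017}, so there is no internal argument to compare against and your proof must stand on its own. Its architecture is sound and is the standard one --- reduce everything to $\rho(\Lambda W)<1$ via the affine recursion and Neumann series, then prove the equivalence of Schur stability with the reachability hypothesis by a closed-set/block-triangular argument in one direction and a maximum-entry (Perron--Frobenius) argument in the other. The converse direction (reachability $\Rightarrow$ Schur stable) is correct as written, granted the paper's standing convention that $(i,j)\in\mathcal{E}$ corresponds to $W^{ij}>0$; without that convention the proposition itself is false (a zero-weight edge could carry the hypothesized path), so it is a fair assumption, though worth stating since your maximum-entry step only propagates along entries with $W^{ij}>0$.

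The contrapositive direction, however, contains a genuine error. You define $R$ as the set of agents reachable from $i$ \emph{including $i$ itself} and then assert ``by hypothesis every agent in $R$ is oblivious.'' The negated hypothesis only says that every agent reachable from $i$ along a path of length at least one is oblivious; it says nothing about $i$ itself unless $i$ lies on a directed cycle. Concretely, take two agents with $\lambda^1=\tfrac12$, $W^{12}=1$, $\lambda^2=1$, $W^{22}=1$: the hypothesis fails at $i=1$, yet agent $1$ is not oblivious, so $\Lambda$ restricted to your $R=\{1,2\}$ is not the identity, the leading diagonal block of $\Lambda W$ is $\Lambda_R W_R$ rather than $W_R$, and it is not row stochastic --- the step ``$\rho(W_R)=1$ because $W_R$ is row stochastic'' does not apply to the matrix actually sitting in that block. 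The repair is immediate: define $R$ as the set of agents $j$ admitting a path of length $\geq 1$ from $i$ (nonempty because row $i$ of $W$ sums to one, hence $i$ has at least one out-edge). This $R$ is still closed under out-edges, all of its members are oblivious by the negated hypothesis (including $i$ itself in the case $i\in R$), and your block-triangular argument then goes through verbatim. With that one-line fix the proof is complete; note also that, contrary to your closing remark, it was this direction rather than the maximum-entry one that needed the extra care.
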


\subsection{Problem Formulation} \label{sec:problem_formulation}

The ability for a resourced eavesdropper to expose the influence structure of an FJ system is outlined in the following results.

\begin{lemma} \label{lem:lambda_W_identifiability}
    If $\Lambda W$ is identifiable, then $\lambda$ and $W$ are identifiable.
\end{lemma}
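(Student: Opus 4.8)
The plan is to read \emph{identifiable} as \emph{recoverable}: since $\Lambda W$ is identifiable, the eavesdropper can pin down the single matrix $M := \Lambda W$, and what must be shown is that $M$, together with the \emph{a priori} structural knowledge that $W$ is row stochastic and adapted to $\mathcal{G}$ and that $\Lambda$ is diagonal, determines $\lambda$ and $W$ uniquely. The whole argument rests on the entrywise factorisation $M^{ij} = \lambda^i W^{ij}$ and on exploiting the row-stochasticity constraint $W\mathds{1} = \mathds{1}$.

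First I would recover the susceptibilities. Right-multiplying by $\mathds{1}$ gives
\begin{equation}
M \mathds{1} = \Lambda W \mathds{1} = \Lambda \mathds{1} = \lambda,
\end{equation}
so each $\lambda^i = \sum_j M^{ij}$ is simply the $i$-th row sum of the identifiable matrix $M$. Hence $\lambda$ (equivalently $\Lambda$) is a fixed, known function of $M$, and is therefore itself identifiable. Next I would recover the influence structure: wherever $\lambda^i > 0$, the factorisation inverts directly as $W^{ij} = M^{ij}/\lambda^i$, i.e. $W = \Lambda^{-1} M$, so the corresponding rows of $W$ are identifiable once $\lambda$ is known.

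The only point where I expect real difficulty is an agent $i$ with $\lambda^i = 0$: such an agent contributes an identically zero row to $M$, and then $M^{ij} = 0$ carries no information about the individual $W^{ij}$. I would resolve this by noting that such a row of $W$ is decoupled from the observable dynamics, since the update for $x^i$ collapses to $x^i_{t+1} = u^i$ independently of $W^{ij}$, so it is immaterial to the eavesdropper's problem. Depending on the paper's precise definition of identifiability, this degenerate case is handled either by restricting attention to the dynamically relevant rows, or, if the standing assumptions guarantee $\Lambda \succ 0$, by invoking global invertibility of $\Lambda$ so that $W = \Lambda^{-1}M$ everywhere. In either reading, composing the two steps shows that identifiability of $\Lambda W$ yields identifiability of both $\lambda$ and $W$, which completes the argument.
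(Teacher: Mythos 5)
Your proof takes essentially the same route as the paper's: recover $\lambda^i$ as the $i$-th row sum of $\Lambda W$ (the paper phrases this as $\lambda^i = \|\lambda^i W^i\|_1$, which coincides with your $M\mathds{1} = \lambda$ since all entries are non-negative), then divide the $i$-th row by $\lambda^i$ to obtain $W^i$. Your explicit handling of the degenerate case $\lambda^i = 0$ is actually more careful than the paper's own proof, which divides by $\lambda^i$ without comment even though $0 \leq \Lambda \leq I_n$ permits zero susceptibilities.
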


\begin{proof}
    Let $i\in\mathcal{V}$ be an arbitrary agent. From the assumptions, the values of $\lambda^iW^i$ are known. By the row stochastic nature of $W$, $\lambda^i = \|\lambda^iW^i\|_1$. The influence structure of agent $i$ can then be identified by $W^i = \frac{1}{\lambda^i} \cdot \lambda^i W^i$. Repeating this for all agents completes the proof.
\end{proof}

\begin{proposition} \label{prop:unmasked_discovery}
	For a sufficiently excited FJ system, an eavesdropper with access to the opinion trajectories of all agents after time $T > 0$ can identify the influence structure.
\end{proposition}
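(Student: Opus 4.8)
The plan is to reduce the identification of the influence structure to a linear least-squares problem and then invoke Lemma \ref{lem:lambda_W_identifiability}. By that lemma it suffices to show that the eavesdropper can recover the product $\Lambda W$ (and, as a byproduct, the affine term $\bar\Lambda u$), since $\lambda$ and $W$ then follow from the row-stochasticity argument already given there.

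First I would write the dynamics \eqref{eq:FJ} as an affine regression in the unknown parameters. Setting $M := \Lambda W$ and $b := \bar\Lambda u$, each observed transition satisfies $x_{t+1} = M x_t + b$. Because $M$ is adapted to $\mathcal{G}$, the only unknowns in row $i$ are the influence elements $\{M^{ij} : j \in \mathcal{N}^i\}$ together with $b^i$; I would collect them in a parameter vector $\theta_i \in \mathds{R}^{d^i+1}$ and define the regressor $\phi^i_t := \big[(x^j_t)_{j\in\mathcal{N}^i}^\top,\ 1\big]^\top$, so that $x^i_{t+1} = (\theta_i)^\top \phi^i_t$. This decouples the problem across agents.

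Next I would stack the transitions observed after time $T$. For each $i$, collecting the scalars $x^i_{t+1}$ into a vector $y_i$ and the regressors $\phi^i_t$ into the rows of a matrix $\Phi_i$ gives the linear system $y_i = \Phi_i \theta_i$. The operative meaning of a \emph{sufficiently excited} system is precisely that, for every agent $i$, the regressor matrix $\Phi_i$ has full column rank $d^i+1$; under this hypothesis the normal equations $\Phi_i^\top \Phi_i\, \theta_i = \Phi_i^\top y_i$ have a unique solution, so $\theta_i$ — and hence the $i$-th row of $M$ and the entry $b^i$ — is determined exactly. Ranging over all $i$ recovers $M = \Lambda W$, and Lemma \ref{lem:lambda_W_identifiability} then yields $\lambda$ and $W$, which is the influence structure.

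The main obstacle is justifying the full-rank (persistent excitation) condition rather than the algebra, since that is the only place the phrase ``sufficiently excited'' does real work. Because Proposition \ref{prop:fj_stability} forces $x_t \to x_\infty$, the regressors $\phi^i_t$ converge to a single fixed direction and the information carried by the tail of the trajectory degenerates, so all of the excitation must come from the transient. A clean way to see this suffices is to pass to increments $\delta_t := x_{t+1}-x_t$, which obey $\delta_{t+1} = M\delta_t$ and thereby absorb the affine term; the regressors for row $i$ then span the Krylov-type subspace generated by $M$ acting on $\delta_T$, projected onto the coordinates $\mathcal{N}^i$, and full rank holds whenever these projections span $\mathds{R}^{d^i}$. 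Showing that this is a genuine (indeed generic) condition on the initial transient — failing only on a lower-dimensional set of initial opinions — is where the care is required, and I would fold exactly this statement into the definition of sufficient excitation; the rest is routine.
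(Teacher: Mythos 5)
Your proof is correct and rests on the same two pillars as the paper's: identification of the dynamics by linear regression, with the rank condition folded into the meaning of ``sufficiently excited,'' followed by Lemma \ref{lem:lambda_W_identifiability} to split the identified product into $\lambda$ and $W$. The regression itself is organised differently, and the comparison is worth making. The paper eliminates the unknown bias by differencing, $\tilde{x}_t = x_{t+1}-x_t$, so that $\tilde{x}_t = \Lambda W \tilde{x}_{t-1}$, and identifies the full matrix $\Lambda W$ in one shot from $n+2$ successive snapshots; excitation there means the $n\times n$ differenced data matrix is invertible (``no trajectory is a linear combination of the others''), justified by the independent randomness of biases and initial opinions. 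You instead keep the affine term $b=\bar\Lambda u$ as an unknown and run a decoupled, per-agent regression with regressors $\bigl[(x^j_t)_{j\in\mathcal{N}^i},\,1\bigr]$, requiring each $\Phi_i$ to have full column rank $d^i+1$. Your parametrisation buys sparsity -- only $d^i+1$ unknowns per row, hence roughly $\max_i d^i+2$ snapshots instead of $n+2$, with the bias recovered as a byproduct -- but it silently assumes the eavesdropper knows the neighbourhoods $\mathcal{N}^i$, which this proposition does not grant (graph knowledge enters the eavesdropper's knowledge set only later, in Section \ref{sec:eavesdropper}); if the graph is unknown you must take $\mathcal{N}^i=\mathcal{V}$, and the advantage evaporates, recovering essentially the paper's formulation. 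The two excitation conditions are consistent: invertibility of the paper's differenced data matrix implies your per-agent rank conditions, since the rows indexed by $\mathcal{N}^i$ then have full row rank $d^i$ and the appended constant column supplies the last dimension. Finally, your closing concern about actually proving the rank condition is handled by the paper exactly as you propose: sufficient excitation is \emph{defined} as the rank condition and argued to hold generically from the randomness of the agents' data, and your increment trick $\delta_{t+1}=M\delta_t$ is precisely the paper's differencing step, rediscovered as an analysis tool.
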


In this case, sufficiently excited means that no opinion trajectory can be reconstructed as a linear combination of the others. This is satisfied when the biases and initial opinions of the agents have an independent, random aspect to them. Since the agents are considered independent entities, we assume this is true.

\begin{proof}
	Using the evolution of the FJ system in \eqref{eq:FJ}, the dependence on the external bias can be removed by taking the difference between two successive updates,
	\begin{equation*}
	\tilde{x}_{t} = \Lambda W \tilde{x}_{t-1},
	\end{equation*}
	where $\tilde{x}_{t} = x_{t+1} - x_{t}$. Since the agents are sufficiently excited, $\Lambda W$ can be identified using linear regression after at least $n+2$ successive measurements of $x_t$,
	\begin{equation}
	\Lambda W \tilde{X}_{T:T+n-1} = \tilde{X}_{T+1:T+n}, \label{eq:fjunmasked}
	\end{equation}
	where $\tilde{X}_{s:t} = \left[\tilde{x}_{s}, ..., \tilde{x}_{t}\right]$. Then from Lemma \ref{lem:lambda_W_identifiability}, $W$ and $\lambda$ can be individually identified.
\end{proof}

The following example instantiates the above results by identifying the influence structure and susceptibilities of a simple 3 agent FJ system using observations of the opinion trajectories for $1 \leq t\leq 5$.

\begin{example}
	Define an FJ system by
	\begin{equation*}
	    W = \begin{bmatrix}
	    0 & 0.5 & 0.5 \\ 0.2 & 0.2 & 0.6 \\ 0.5 & 0 & 0.5
	    \end{bmatrix}, \enskip \lambda = \begin{bmatrix}
	    0.4 \\ 0.5 \\ 0.6
	    \end{bmatrix}, \enskip x_0 = u = \begin{bmatrix}
	    1 \\ 2 \\ 3
	    \end{bmatrix}.
	\end{equation*}
	The opinion trajectories for $1\leq t \leq 5$ are
	\begin{equation*}
	    \begin{bmatrix}
	    1.6 & 1.52 & 1.5 & 1.4916 & 1.48676 \\
	    2.2 & 2.1 & 2.082 & 2.071 & 2.0651 \\
	    2.4 & 2.4 & 2.376 & 2.3628 & 2.35632
	    \end{bmatrix},
	\end{equation*}
	which produces
	\begin{align*}
	    \tilde{X}_{1:3} &= \begin{bmatrix}
        -0.08 & -0.02 & -0.0084 \\
        -0.1 & -0.018 & -0.011 \\
        0 & -0.024 & -0.0132
	    \end{bmatrix}, \\ \tilde{X}_{2:4} &= \begin{bmatrix}
	    -0.02 & -0.0084 & -0.00484 \\
        -0.018 & -0.011 & -0.0059 \\
        -0.024 & -0.0132 & -0.00648
        \end{bmatrix}.
	\end{align*}
	We can identify $\Lambda W$ by
	\begin{equation*}
	    \Lambda W = \tilde{X}_{2:4}\left(\tilde{X}_{1:3}\right)^{-1} = \begin{bmatrix}
	    0 & 0.2 & 0.2 \\ 0.1 & 0.1 & 0.3 \\ 0.3 & 0 & 0.3
	    \end{bmatrix}.
	\end{equation*}
	Finally using Lemma \ref{lem:lambda_W_identifiability}, we can individually calculate $\lambda$ and $W$ on a row by row basis,
	\begin{equation*}
	    \lambda = \|W\|_1 = \begin{bmatrix}
	    0.4 \\ 0.5 \\ 0.6
	    \end{bmatrix}, \enskip W = \Lambda W / \lambda = \begin{bmatrix}
	    0 & 0.5 & 0.5 \\ 0.2 & 0.2 & 0.6 \\ 0.5 & 0 & 0.5
	    \end{bmatrix}.
	\end{equation*}
\end{example}

We define a new metric to quantify the performance of an influence structure mask.

\begin{definition}
	The estimate error is the square root of the largest eigenvalue of the covariance for a maximum likelihood estimate of the influence structure. This is equivalent to the inverted square root of the E-optimal experiment design criterion.
\end{definition}

Similar privacy metrics using maximum likelihood estimation are common for analysing differential privacy schemes such as in \cite{Mo2017}. This metric is a valid measure of privacy because it is the expected error between the elements from a best estimate of $W$ given no a priori knowledge. Since the range of permissible values of $W$ is between 0 and 1, if the expectation of the estimate error is greater than 1, the influence structure is said to be undiscoverable. Otherwise, we call the influence structure discoverable. The goal of this paper is to develop a mask to make the influence structure of an FJ system undiscoverable.

\section{A Family of Masked FJ Models} \label{sec:fjstability}


Popular schemes for privacy preservation perturb the system behaviour in various ways, so there is an inherent tension between secrecy and maintaining normal function. Therefore, it is useful to characterise the stability of FJ systems under a wide range of distortions. This establishes a ``development environment'' outlining a set of rules a mask must follow to preserve the limiting conditions of the system. Theorem \ref{thm:fjtheorem} provides this classification for a range of disturbances to the parameters of an FJ system. 

\begin{theorem}
	\label{thm:fjtheorem}
	Given an FJ system \eqref{eq:FJ} which is stable with limit $x_\infty$, and an arbitrary set of element-wise matrix and vector sequences
	\begin{equation*}
	\left. \begin{matrix}
	W_t \to W \\
	\Lambda_t \to \Lambda \\
	u_t \to u
	\end{matrix} \right\} \text{as} \ t \to \infty,
	\end{equation*}
	with $u_t$ converging exponentially, the system
	\begin{equation}
	z_{t+1} = \Lambda_t W_t z_t + \bar\Lambda_tu_t \label{eq:FJk}
	\end{equation}
	converges to $x_\infty$ independent of the initial opinions.
\end{theorem}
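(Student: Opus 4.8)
The plan is to reduce the problem to the stability of a linear time-varying (LTV) error recursion whose ``frozen'' limit is the Schur-stable matrix $\Lambda W$, and then invoke a converging-input/converging-state argument. Writing $A_t := \Lambda_t W_t$, continuity of matrix multiplication gives $A_t \to \Lambda W =: A$, and since the nominal system is stable, Proposition~\ref{prop:fj_stability} guarantees that $A$ is Schur stable. The natural first step is to translate \eqref{eq:FJk} into error coordinates $e_t := z_t - x_\infty$. Using the fixed-point identity $x_\infty = A x_\infty + \bar\Lambda u$ and rearranging, one obtains
\begin{equation*}
e_{t+1} = A_t e_t + \delta_t, \qquad \delta_t := (A_t - A)x_\infty + (\bar\Lambda_t u_t - \bar\Lambda u),
\end{equation*}
where the forcing term satisfies $\delta_t \to 0$ because $A_t \to A$, $\bar\Lambda_t \to \bar\Lambda$ and $u_t \to u$, the exponential rate of $u_t$ supplying in particular an explicit decay rate for the second summand.

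Next I would establish uniform exponential stability of the homogeneous part $e_{t+1} = A_t e_t$. The key fact is that a Schur matrix $A$ admits, for any $\epsilon>0$, an induced norm $\|\cdot\|_*$ with $\|A\|_* \le \rho(A)+\epsilon < 1$. Fixing $\epsilon$ so small that $\gamma := \rho(A)+2\epsilon < 1$, the convergence $A_t \to A$ (in this, as in every, norm on a finite-dimensional space) yields some $N$ with $\|A_t\|_* \le \gamma$ for all $t \ge N$. Submultiplicativity then bounds the state-transition product $\Phi(t,s) := A_{t-1}\cdots A_s$ by $\|\Phi(t,s)\|_* \le \gamma^{\,t-s}$ for $t > s \ge N$, and absorbing the finitely many early factors into a constant $C$ gives $\|\Phi(t,s)\|_* \le C\gamma^{\,t-s}$ for all $t \ge s$.

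The final step is the variation-of-constants estimate. Solving the error recursion,
\begin{equation*}
e_t = \Phi(t,0)e_0 + \sum_{s=0}^{t-1}\Phi(t,s+1)\,\delta_s,
\end{equation*}
so $\|e_t\|_* \le C\gamma^{\,t}\|e_0\|_* + C\sum_{s=0}^{t-1}\gamma^{\,t-s-1}\|\delta_s\|_*$. The first term vanishes for every initial condition, which delivers the claimed independence of the initial opinions. For the convolution term I would split the sum at an index $M$ beyond which $\|\delta_s\|_* < \eta$: the finite head is a fixed number of terms each carrying a factor $\gamma^{\,t-s-1}\to 0$, while the tail is bounded by $C\eta/(1-\gamma)$. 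Letting $t\to\infty$ and then $\eta\to 0$ forces $e_t \to 0$, i.e. $z_t \to x_\infty$.

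I expect the main obstacle to be the homogeneous stability step rather than the bookkeeping. Pointwise Schur-stability of each $A_t$ is by itself insufficient, since products of individually Schur matrices can diverge; the argument must therefore exploit the existence of a single limiting Schur matrix to produce a common contracting norm valid for all large $t$. Establishing (or citing) this ``$\rho(A)+\epsilon$'' norm and verifying that element-wise convergence transfers to it is the delicate part. Once a uniform contraction $\gamma<1$ is in hand, the remaining convergence is the routine decaying-kernel convolution above, in which the exponential decay of $u_t$, stronger than the mere $\delta_t\to 0$ that the estimate strictly requires, keeps the forcing contribution comfortably controlled.
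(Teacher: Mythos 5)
Your proposal is correct, and it reaches the conclusion by a genuinely different route from the paper. The two arguments share their first key ingredient: a single norm in which $A_t := \Lambda_t W_t$ is a uniform contraction for all sufficiently large $t$. The paper builds it as a quadratic Lyapunov norm $V(x)=\sqrt{x'Px}$ from the discrete Lyapunov equation for $\Lambda W$, then uses openness of the set $\{M \ | \ (1-\tilde\alpha)P - M'PM \succ 0\}$ to absorb the tail of the sequence $A_t$; you use the induced ``$\rho(A)+\epsilon$'' norm plus equivalence of norms. These are interchangeable devices, so the step you flagged as delicate is handled in both proofs, just with different constructions. After that the proofs part ways. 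The paper stays in the original coordinates: it shows $V(z_t)$ is ultimately bounded, then shows the increments $y_t = z_{t+1}-z_t$ decay to zero exponentially (a Cauchy-type argument), and only at the end identifies the limit by passing to the limit in the recursion. You instead move to error coordinates $e_t = z_t - x_\infty$ and finish with variation of constants and a decaying-kernel convolution estimate. Your route buys two things. First, economy: the boundedness lemma, the Cauchy step, and the separate limit identification all disappear because the limit is built into the coordinates. Second, and more substantively, generality: your convolution bound needs only $\delta_t \to 0$ with no rate, so the hypothesis that $u_t$ converge exponentially is never actually used, and the theorem comes out under plain convergence of all three sequences. The paper's increment argument, by contrast, genuinely consumes rates; its key assertion that $V([\Lambda_t W_t - \Lambda_{t-1}W_{t-1}]z_{t-1}) \to 0$ exponentially requires exponential convergence of the matrix sequence $\Lambda_t W_t$, which the theorem's hypotheses do not supply (only $u_t$ carries a rate), whereas mere convergence of $\Lambda_t W_t$ need not make the increments even summable. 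Your proof sidesteps this issue entirely and therefore covers the statement exactly as written -- in fact slightly more.
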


\begin{proof}
	From Proposition \ref{prop:fj_stability}, an FJ system is stable if and only if the matrix $\Lambda W$ is Schur stable. Given that $\Lambda W$ is Schur stable, for some symmetric positive definite matrix $R\in\mathds{R}^{n\times n}$, there exists a symmetric positive definite matrix $P \in \mathds{R}^{n\times n}$ such that \cite{Hespanha2018}
	\begin{equation*}
	P - (\Lambda W)'P(\Lambda W) = R.
	\end{equation*}
	As $P$ and $R$ are both positive definite, there exists some $\tilde\alpha \in (0,1)$ such that
	\begin{equation*}
	R - \tilde\alpha P \succ 0,
	\end{equation*}
	which implies
	\begin{equation}
	(1-\tilde\alpha)P - (\Lambda W)'P(\Lambda W) \succ 0. \label{eq:in_m}
	\end{equation}
	Define the set of matrices
	\begin{equation*}
	\mathcal{M} := \{M \in \mathds{R}^{n\times n} \ | \ (1-\tilde\alpha)P - M'PM \succ 0\}.
	\end{equation*}
	Since the set of positive definite matrices is open, and the function $(1-\tilde\alpha)P - M'PM$ is continuous with respect to $M$, $\mathcal{M}$ is an open set. Also, from \eqref{eq:in_m}, $\Lambda W \in \mathcal{M}$. By the multiplication rule for limits $\Lambda_t W_t \to \Lambda W \in \mathcal{M}$, hence there exists a $t_0 \in \mathds{N}$ such that for all $t \geq t_0$, $\Lambda_t W_t \in \mathcal{M}$.
	
	Define a discrete time system as
	\begin{equation}
	x_{t+1} = \Lambda_t W_t x_t := f_t(x_t). \label{eq:lyapunov_system}
	\end{equation}
	As $P$ is positive definite, the function $V: \mathds{R}^n \to \mathds{R}$ defined by $V(x)=\sqrt{x'Px}$ is a vector norm. $V$ is a continuous, positive definite, radially unbounded function, and for all $t \geq t_0$ and all $x \in \mathds{R}^n \setminus \{0\}$,
	\begin{align*}
	V(f_t(x)) &= \sqrt{x'(\Lambda_t W_t)'P(\Lambda_t W_t) x} \\
	&< \sqrt{x (1-\tilde\alpha)Px} \\
	&= \alpha V(x),
	\end{align*}
	where $\alpha := \sqrt{1-\tilde\alpha} < 1$. Hence, $V$ is a well-defined Lyapunov function for \eqref{eq:lyapunov_system} when $t \geq t_0$.
	
	For $z_t$ defined by \eqref{eq:FJk}, $z_{t_0}$ will remain finite provided $t_0$ is finite. Moreover, the convergence properties of a sequence are independent of the first finite terms. Therefore, for the remainder of the proof, the analysis of system \eqref{eq:FJk} is reduced to the analysis on $(z_t)_{t \geq t_0}$. Using the triangle inequality,
	\begin{equation}
	V(z_{t+1}) = V(f_t(z_t) + \bar\Lambda_t u_t) \leq V(f_t(z_t)) + V(\bar\Lambda_t u_t). \label{eq:bounded_proof1}
	\end{equation}
	By the multiplication rule for limits $\bar\Lambda_t u_t \to \bar\Lambda u$, hence there exists an $m_1 > 0$ such that for all $t \in \mathds{N}$, $V(\bar\Lambda_t u_t) < m_1$. Continuing from \eqref{eq:bounded_proof1} gives
	\begin{align}
	V(z_{t+1})
	&<\alpha V(z_t) + m_1.
	\end{align}
	Therefore, for all $V(z_t) > \frac{m_1}{1-\alpha}$,
	\begin{equation}
	V(z_{t+1}) < \alpha V(z_t) + (1-\alpha) V(z_t) = V(z_t).
	\end{equation}
	So $V(z_t)$ is strictly decreasing when it is larger than $\frac{m_1}{1-\alpha}$, implying that $z_t$ is bounded. 

	Define the sequence $y_t = z_{t+1} - z_t$. By the triangle inequality,
	\begin{align*}
	V(y_t)
	&= V(\Lambda_t W_t z_t + \bar\Lambda_t u_t - \Lambda_{t-1}W_{t-1}z_{t-1} - \bar\Lambda_{t-1}u_{t-1}) \\
	&= V(\Lambda_t W_t z_t - \Lambda_t W_t z_{t-1} \\
	&\qquad + \Lambda_t W_t z_{t-1} - \Lambda_{t-1}W_{t-1}z_{t-1} \\
	&\qquad + \bar\Lambda_t u_t - \bar\Lambda_{t-1}u_{t-1}) \\
	&\leq V(f_t (y_{t-1})) + V([\Lambda_t W_t - \Lambda_{t-1}W_{t-1}]z_{t-1}) \\
	&\qquad + V(\bar\Lambda_t u_t - \bar\Lambda_{t-1}u_{t-1}).
	\end{align*}
	As $z_t$ is bounded and $\Lambda_tW_t$ contracts exponentially under $V$, $V([\Lambda_t W_t - \Lambda_{t-1}W_{t-1}]z_{t-1}) \to 0$ exponentially. Additionally, $V(\bar\Lambda_t u_t - \bar\Lambda_{t-1}u_{t-1}) \to 0$ exponentially from the exponential convergence of $u_t$. Therefore, there exists some $\beta < 1$ and $m_2 > 0$ such that,
	\begin{align*}
	m_2\beta^t &> V([\Lambda_t W_t - \Lambda_{t-1}W_{t-1}]z_{t-1})\\
	&\qquad +V(\bar\Lambda_t u_t - \bar\Lambda_{t-1}u_{t-1}).
	\end{align*}
	Combining these results gives
	\begin{equation*}
	V(y_t) < \alpha V(y_{t-1}) + m_2\beta^t.
	\end{equation*}
	When $V(y_{t-1}) > \frac{2m_2\beta^t}{1-\alpha}$,
	\begin{equation*}
	V(y_t) < \alpha V(y_{t-1}) + \frac{1-\alpha}{2} V(y_{t-1}) = \frac{1+\alpha}{2} V(y_{t-1}).
	\end{equation*}
	So $V(y_t)$ falls below $m_2\beta^t$ at an exponential rate. Since $m_2\beta^t \to 0$ exponentially, $V(y_t) = V(z_{t+1}-z_t) \to 0$ exponentially.  Thus $z_t$ converges to some $z_\infty \in \mathds{R}^{n}$. Taking the limit of both sides of \eqref{eq:FJk} gives
	\begin{equation*}
	z_\infty = \Lambda W z_\infty + \bar{\Lambda} u
	\implies z_\infty = (I - W\Lambda)^{-1}\bar\Lambda u = x_\infty,
	\end{equation*}
	completing the proof.
\end{proof}

\section{Influence Structure Mask} \label{sec:mask}


In this section, we present a specific realisation of the class defined in Theorem \ref{thm:fjtheorem} to mask the values of the influence structure in an FJ system. As indicated by the theorem, this mask offsets the true influence structure to produce confidentiality. The susceptibilities and the external biases remain unchanged because we assume the worst-case scenario that the eavesdropper is aware of these values. Additionally, the non-influence elements of $W$ are unaltered because the existence of the corresponding connections in the underlying graph is not certain.

By motivation from differential privacy ideas explored in \cite{Mo2017,Altafini}, the mask we developed is
\begin{equation}
W_t = W + e^{-\varphi t} V_t, \label{eq:Wmask}
\end{equation}
where $\varphi > 0$ is called the decay rate and is publicly known. $V_t$ is a matrix adapted to the network $\mathcal{G}$ with the remaining influence elements independently chosen from $\mathcal{N}(0,1)$. Each row of $V_t$ is chosen locally and privately by the corresponding agent, allowing the algorithm to be compatible with distributed systems. Compared to \cite{Mo2017,Altafini}, we propose to add a mask mechanism to the weights instead of the dynamic states of the system. While conceptually similar, the significant difference is that our mask leads to a time-varying system, and the others a time-invariant system with time-varying noises. 

At each time step, the following actions are taken:
\begin{enumerate}[(i)]
	\item Each agent populates their elements of $V_t$ with independently selected random variables chosen from $\mathcal{N}(0,1)$.
	\item The decoy influence structure is set as $W_t = W + e^{-\varphi t}V_t$.
	\item The opinions are pooled: $x_{t+1} = \Lambda W_tx_{t} + \bar\Lambda u$.
\end{enumerate}
$W_t$ converges to $W$ by the decay term, hence Theorem \ref{thm:fjtheorem} enforces that the system
\begin{equation*}
x_{t+1} = \Lambda W_t x_t + \bar{\Lambda} u
\end{equation*}
converges to the same consensus as \eqref{eq:FJ}, provided \eqref{eq:FJ} is stable.


The idea behind the mask is that the decay rate $\varphi$, is slower than the convergence of the agent's opinions to their consensus value, so the opinions convergence is dominated by the mask fading rather than the dynamics of the FJ system. Figure \ref{fig:iterative_offset_mask} illustrates the application of this mask to a standard 5 agent FJ system.

\begin{figure}[htb!]
	\centering
	\includegraphics[width=.6\columnwidth]{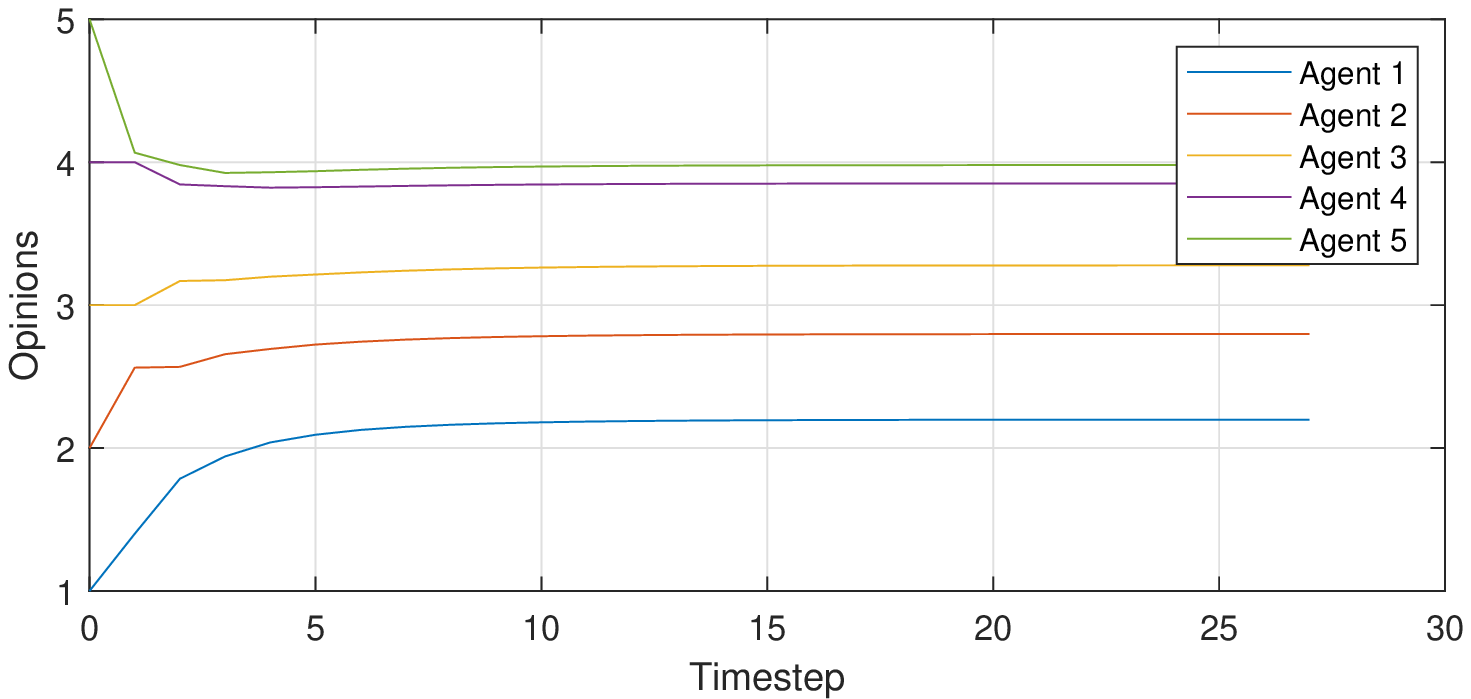}
	\includegraphics[width=.6\columnwidth]{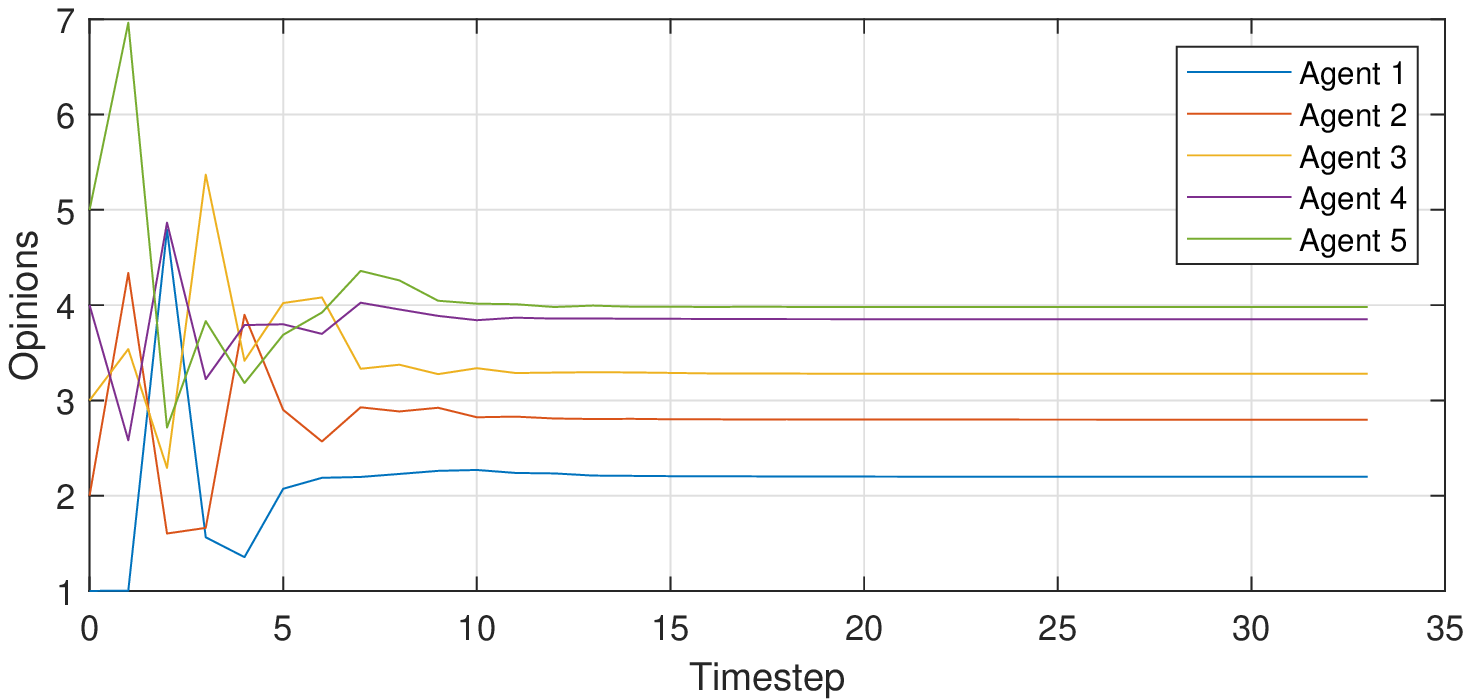}
	\caption{Comparison of FJ system without (top) and with (bottom) the implementation of the mask}
	\label{fig:iterative_offset_mask}
\end{figure}

\section{System Discoverability} \label{sec:discoverability}

We aim to prevent the influence structure of any agent from being directly measured or inferred by observers who gain access to the data flow and structure of the distributed system during the opinion pooling. In this section, we introduce the eavesdropper of concern which enjoys a worst-case knowledge set of the distributed system parameters. We determine the information matrix for the maximum likelihood estimate by this eavesdropper and use numerical results to discuss threats to the security of the distributed system. The maximum likelihood estimation problem is an extension of \cite{Ravazzi2018} where the influence structure is estimated from a range of initial biases and limiting opinions resulting from multiple opinion poolings over the same FJ network.

\subsection{The Eavesdropper} \label{sec:eavesdropper}

The eavesdropper attempts to estimate the parameter $W$ through observations of $x_t$ and the remaining parameters of the FJ system. We assume the formula of the mask and the distribution of the randomly generated elements of $V_t$ are public knowledge. The knowledge space of the eavesdropper is therefore
\begin{equation*}
\mathcal{K} = \left\{\mathcal{G}, u, \Lambda, \{x_0, ..., x_T\} \right\},
\end{equation*}
where $T$ is the timestep when a consensus is reached, and the agents cease their opinion broadcasts.

With the mask, the trajectory of opinions in the FJ system is given by \eqref{eq:Wmask}. Rearranging this produces
\begin{equation}
W_tx_t = \Lambda^{-1} \left( x_{t+1} - \bar\Lambda u\right). \label{eq:W_linear_expression}
\end{equation}
Without loss of generality, we assume the eavesdropper is seeking the influence structure of the first agent. Specifically, the eavesdropper must compute the values in the first row of $W$. $W$ is adapted from the underlying network $\mathcal{G}$, so any elements of $W$ representing disconnections are known by the eavesdropper to be zero. Therefore, exposing the influence structure of the first agent reduces to the problem of calculating the influence elements in the first row of $W$. 

To accommodate notation, let $w$, $w_t$ and $v_t$ be vectors holding the influence elements of the first rows of $W$, $W_t$ and $V_t$ respectively. Consequently,
\begin{equation}
w_t = w + e^{-\varphi t}v_t. \label{eq:wmask}
\end{equation}

Let $A_t$ be the row vector containing the opinions $x_t$ which directly influence the opinion of the first agent, and $B_t = \Lambda^{-1} \left( x_{t+1} - \bar\Lambda u\right)$ be the right-hand side of \eqref{eq:W_linear_expression}. Taking $b_t$ as the first element of $B_t$,
\begin{equation}
A_t w_t = b_t. \label{eq:linalg}
\end{equation}
The number of influence elements in the first row of $W$ is equivalent to the degree of the first agent $d^1$, which we denote $d$ for ease. It follows that the vectors $w$, $w_t$, $v_t$ and $A_t$ all have $d$ elements.

\subsection{Maximum Likelihood Estimation}

The eavesdropper understands that $w_t$ is generated by \eqref{eq:wmask}, where the elements of $v_t$ are selected from $\mathcal{N}(0,1)$. Therefore, the elements of $e^{-\varphi t} v_t$ can be considered to be selected from the distribution $\mathcal{N}(0,e^{-2\varphi t})$. The probability density function of this vector is
$$p(e^{-\varphi t} v_t) = \left(\frac{1}{e^{-\varphi t}\sqrt{2\pi}} \right)^{d} \exp \left(-\frac{\| e^{-\varphi t}v_t \|^2}{2e^{-2\varphi t}} \right).$$
This can be treated as a likelihood function of $w$ given an observation $w_t$ using \eqref{eq:wmask}.
$$
L(w \ | \ w_t) = \left(\frac{1}{e^{-\varphi t}\sqrt{2\pi}} \right)^{d} \exp \left( - \frac{ \| w_t - w \|^2 }{2e^{-2\varphi t} } \right)
$$
Ignoring the scalar term, the log-likelihood of $w$ given $w_t$ is
$$
l(w \ | \ w_t) =  -\frac{ \| w_t - w \|^2 }{2e^{-2\varphi t} }.
$$
Therefore, the log-likelihood function of $w$ for a range of observations is
$$
l(w \ | \ w_0,...,w_{T-1}) =  -\sum_{t=0}^{T-1}\frac{ \| w_t - w \|^2 }{2e^{-2\varphi t} }.
$$

The values of $w_t$ cannot be directly measured by the eavesdropper. Instead, each observation provides a restriction for $w_t$ expressed by the constraint \eqref{eq:linalg}. Additional constraints on $w$ are $\mathds{1}'w = 1$ and $w \geq 0$ because $W$ is row stochastic.

Using all information available to the eavesdropper, the maximum likelihood estimate for $w$ can be formulated as the solution to a quadratic optimisation problem.
\begin{maxi}|l|
	{w, w_t}{-\sum_{t=0}^{T-1} \frac{ \| w_t - w \|^2 }{2e^{-2\varphi t} }}{}{}
	\addConstraint{A_tw_t}{= b_t}{}
	\addConstraint{\mathds{1}'w}{= 1}
	\addConstraint{w}{\geq 0}{} \label{eq:first_optimisation}
\end{maxi}

Define the vector $\tilde{w} = \left[w_0', ..., w_{T-1}', w'\right]'$ to hold the variables of \eqref{eq:first_optimisation}, and define $
A = \diag(A_0, ..., A_{T-1}, \mathds{1}')$. Similarly, define the vector $ b = \left[b_0,...,  b_{T-1}, 1 \right]'$. Therefore, the set of linear equality constraints can be expressed as $A\tilde{w} = b$.

Define the diagonal matrix $H_k = \diag(H_{0,k},...,H_{T-1,k})$, where $H_{t,k} = e^{2\varphi t} I_k$ are the identity matrices iteratively multiplied by the scaling factors of the cost function in \eqref{eq:first_optimisation}. Also define
\begin{equation*}
Y = \begin{bmatrix}
I_d & & & -I_d \\
& \ddots & & \vdots \\
& & I_d & -I_d
\end{bmatrix} \in \mathds{R}^{Td \times (T+1)d}.
\end{equation*}
It follows that $l(\tilde{w}) = -1/2 \cdot \tilde{w}'Y'H_dY\tilde{w}$, so the maximum likelihood problem can be reduced to:
\begin{maxi*}|l|
	{\tilde{w}}{-\frac{1}{2} \cdot \tilde{w}'Y'H_dY\tilde{w}}{}{}
	\addConstraint{A\tilde{w}}{= b}
	\addConstraint{w}{\geq 0.}
\end{maxi*}

Define $Q = \diag(Q_0,...,Q_{T-1}, Q_\mathds{1})$ where $Q_t \in \N(A_t)$ and $Q_\mathds{1} \in \N(\mathds{1}')$. Therefore, $Q \in \N(A)$. The information matrix of the maximum likelihood estimate is the negative curvature of the cost function restricted to the constraints, which is given by
\begin{equation*}
\mathcal{I} = \frac{\partial^2}{\partial s^2} \left(\frac{1}{2}\cdot sQ'Y'H_dYQs\right) = Q'Y'H_dYQ.
\end{equation*}

Although $\mathcal{I}$ gives the Fisher information for the estimate quality of $\tilde w$, only the information for $w$ is of importance because the variables $w_t$ were merely constructed to aid in its recovery. It is of no concern to the eavesdropper if the information about the variables $w_t$ is minimal, provided the estimate for $w$ is accurate.

By block matrix inversion \cite{Lu2002}, the information matrix of $w$ is the Schur compliment of the bottom right $(d-1)\times (d-1)$ block of $\mathcal{I}$. Expanding and simplifying the expression for $\mathcal{I}$ yields
\begin{equation*}
\mathcal{I} = \begin{bmatrix} 
H_{d-1} & K \\ K' & R
\end{bmatrix}
\end{equation*}
where $R = \sum_{t=0}^{T-1}e^{2t\varphi} I_{d-1}$, and
\begin{equation*}
K = \begin{bmatrix}
-Q_0'Q_{\mathds{1}} \\ -e^{2\varphi} Q_1'Q_{\mathds{1}} \\ \vdots \\ -e^{2\varphi(T-1)} Q_{T-1}'Q_{\mathds{1}}
\end{bmatrix}.
\end{equation*}
Thus, the information matrix for $w$ is
\begin{align*}
\mathcal{I}_w 
&= R - K'\left(H_{d-1}\right)^{-1}K \\
&= \sum_{t=0}^{T-1}e^{2t\varphi} (I_{d-1} - Q_{\mathds{1}}' Q_{t}Q_{t}'Q_{\mathds{1}}).
\end{align*}

\begin{claim}
	Let $A \in \mathds{R}^{1\times n}$ be a row matrix and define $\hat A = \frac{A}{\|A\|}$ as the unit vector in the direction of $A$. Set $\hat Q \in \N(A)$. Then
	\begin{equation}
	\hat{Q}\hat{Q}' = I_n - \hat{A}'\hat{A}.
	\end{equation}
\end{claim}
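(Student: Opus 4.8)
The plan is to assemble an orthogonal $n \times n$ matrix out of the normalized vector $\hat A$ and the nullspace basis $\hat Q$, and then read off the identity $\hat Q \hat Q' = I_n - \hat A' \hat A$ from the fact that a square matrix with orthonormal columns satisfies $UU' = I$. The geometric content is simply that $\hat A' \hat A$ is the orthogonal projector onto the one-dimensional row space of $A$, while $\hat Q \hat Q'$ is the orthogonal projector onto the nullspace of $A$; since these subspaces are orthogonal complements in $\mathds{R}^n$, the two projectors must sum to $I_n$.

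First I would note that $A \neq 0$ (otherwise $\hat A$ is undefined), so the nullspace of the $1 \times n$ matrix $A$ has dimension $n-1$, and hence $\hat Q \in \mathds{R}^{n \times (n-1)}$ with $\hat Q' \hat Q = I_{n-1}$ by the orthonormality of its columns. Next I would form the block matrix $U = [\hat A' \ \hat Q] \in \mathds{R}^{n \times n}$ and verify that its columns are orthonormal. The column $\hat A'$ is a unit vector because $\hat A \hat A' = \|A\|^{-2} A A' = 1$; the columns of $\hat Q$ are orthonormal by hypothesis; and the cross term vanishes since $\hat Q \in \N(A)$ gives $A \hat Q = 0$, whence $\hat A \hat Q = \|A\|^{-1} A \hat Q = 0$. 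Collecting these into the product $U'U$ gives $I_n$.

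Since $U$ is a square matrix with $U'U = I_n$, it is orthogonal, so $UU' = I_n$ as well. Expanding this product in block form gives
\begin{equation*}
I_n = UU' = [\hat A' \ \hat Q] \begin{bmatrix} \hat A \\ \hat Q' \end{bmatrix} = \hat A' \hat A + \hat Q \hat Q',
\end{equation*}
and rearranging yields the claim. The only point requiring care is the passage from $U'U = I_n$ to $UU' = I_n$, which relies on $U$ being square; this is guaranteed by the dimension count $1 + (n-1) = n$, i.e. by $\hat Q$ supplying exactly $n-1$ basis vectors for the nullspace. I do not anticipate a genuine obstacle here, since this is precisely the standard decomposition of the identity into complementary orthogonal projectors, and the whole argument hinges only on the orthonormality of the columns of $U$.
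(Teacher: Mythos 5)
Your proof is correct, but it packages the key fact differently from the paper. The paper's proof is pointwise: it takes an arbitrary $x \in \mathds{R}^n$, decomposes it as $x = \hat Q x_Q + \hat A' x_A$ using the assertion that the columns of $\hat Q$ together with $\hat A'$ form an orthonormal basis of $\mathds{R}^n$, and then checks that both $\hat Q \hat Q'$ and $I_n - \hat A'\hat A$ send $x$ to $\hat Q x_Q$, concluding equality since the two matrices agree on every vector. You instead work at the matrix level: you assemble $U = [\hat A' \ \hat Q]$, verify $U'U = I_n$ from the same three orthonormality relations ($\hat A \hat A' = 1$, $\hat Q'\hat Q = I_{n-1}$, $\hat A \hat Q = 0$), and invoke the fact that a \emph{square} matrix with a left inverse has it as a right inverse, so $UU' = \hat A'\hat A + \hat Q\hat Q' = I_n$. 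Both arguments rest on the same completeness fact (that $\hat A'$ and the columns of $\hat Q$ jointly span $\mathds{R}^n$), but they consume it differently: the paper uses it to decompose vectors, you use it to certify that $U$ is square via the dimension count $1 + (n-1) = n$. Your route is slightly more compact and has the merit of making explicit the hypotheses the paper leaves implicit --- that $A \neq 0$ and that the nullspace has dimension exactly $n-1$ --- whereas the paper's route is more self-contained, needing nothing beyond the defining orthonormality relations and hence no appeal to the left-inverse-equals-right-inverse theorem.
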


\begin{proof}
	Let $x \in \mathds{R}^n$ be an arbitrary vector. By definition of the nullspace, the columns of $\hat Q$ and $\hat{A}'$ form an orthonormal basis for $\mathds{R}^n$. Hence, there exists vectors $x_Q$ and $x_A$ such that
	\begin{equation}
	x = \hat Qx_Q + \hat{A}'x_A
	\end{equation}
	This gives
	\begin{align*}
	\hat{Q}\hat{Q}' x
	&= \hat{Q}\hat{Q}'\hat Qx_Q + \hat{Q}\hat{Q}'\hat{A}'x_A \\
	&= \hat{Q}x_Q.
	\end{align*}
	On the other hand,
	\begin{align*}
	(I_n - \hat{A}'\hat{A}) x
	&= \hat Qx_Q + \hat{A}'x_A - \hat{A}'\hat{A}\hat Qx_Q - \hat{A}'\hat{A}\hat{A}'x_A \\
	&=  \hat{Q}x_Q.
	\end{align*}
	Since $\hat{Q}\hat{Q}' x = (I_n - \hat{A}'\hat{A}) x$ for all $x \in \mathds{R}^n$, it follows that $\hat{Q}\hat{Q}' = I_n - \hat{A}'\hat{A}$.
\end{proof}

Let $\hat{A}_t = \frac{A_t}{\|A_t\|}$ be the unit vector in the direction of $A_t$. From the claim it follows
\begin{align}
\mathcal{I}_w
&= \sum_{t=0}^{T-1}e^{2t\varphi} (I_{d-1} - Q_{\mathds{1}}'(I_d - \hat{A}_t'\hat{A}_t)'Q_{\mathds{1}}) \nonumber\\
&= \sum_{t=0}^{T-1}e^{2t\varphi} Q_{\mathds{1}}'\hat{A}_t'\hat{A}_tQ_{\mathds{1}}. \label{eq:fisher_info}
\end{align}
The inverse of this matrix is the covariance for the maximum likelihood estimate of $w$. Thus, the square root of the largest principal component of $\Sigma_w = \left(\mathcal{I}_w\right)^{-1}$ is the estimate error for the eavesdropper.

The value of the estimate error cannot be cleanly extracted from this information matrix formulation. While the exponentially growing term appears concerning, the auxiliary principal component contributions of the matrices $Q_{\mathds{1}}'\hat{A}_t'\hat{A}_tQ_{\mathds{1}}$ exponentially shrink. The rate of this reduction is approximately equal to the unmasked system's convergence. The performance of the mask thus results as expected from the tension between the decay rate of the mask and convergence rate of the system. Figure \ref{fig:est_error_dist} illustrates  the best possible estimate errors for 1,000,000 randomly generated, masked FJ systems of 100 agents with degree 10, and decay rate 1. These errors are finite, so the influence structure is identifiable, however the expected estimate error is much greater than 1, so the influence structure is undiscoverable. The extent to which it is remains undiscoverable over varying system parameters is discussed using further numerical results in the following subsection.

\begin{figure}[htb!]
	\centering
	\includegraphics[width=.6\columnwidth]{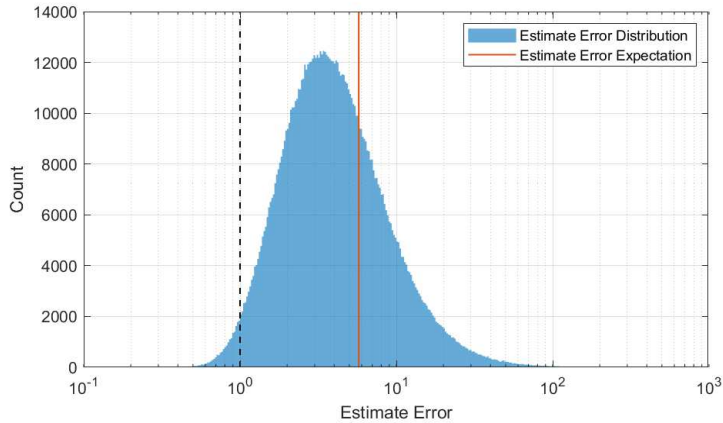}
	\caption{Distribution of the estimate error for random masked FJ systems}
	\label{fig:est_error_dist}
\end{figure}

\subsection{Discoverability Dependence on System Parameters}

Due to the complex and stochastic nature of $\mathcal{I}_w$, a deterministic lower bound on the estimate error could not be calculated. Therefore, we moved our analysis away from an analytical approach, and towards a numerical one.

To determine the dependencies of the estimate error, we simulated many masked FJ systems holding the hyper-parameters constant, except the one under analysis. We tuned this parameter over a range of practical values and recorded the results. When each of the global parameters were not being altered, we set the number of agents to 100, the degree of the system to 10, the decay rate to 0.3, the convergence tolerance to $10^{-4}$, and the initial opinions and susceptibilities were selected from the uniform distribution [0,1]. The convergence tolerance is the maximum allowable change of the agents' opinions over a single iteration for the system to be considered converged.

A comparison of the spreads of the estimate error against the decay rate of the mask is shown in Figure \ref{fig:decay_dependence}. Interestingly, there is an optimal decay rate for privacy preservation. For an FJ system defined by the previous global parameters, the box plots show that the decay rate which maximises the privacy of this system is $\varphi = 1$. The expected estimate error at this decay rate was 5.19, thus we concluded in this case the influence structure was undiscoverable.

\begin{figure}[ht!]
	\centering
	\includegraphics[width=.6\columnwidth]{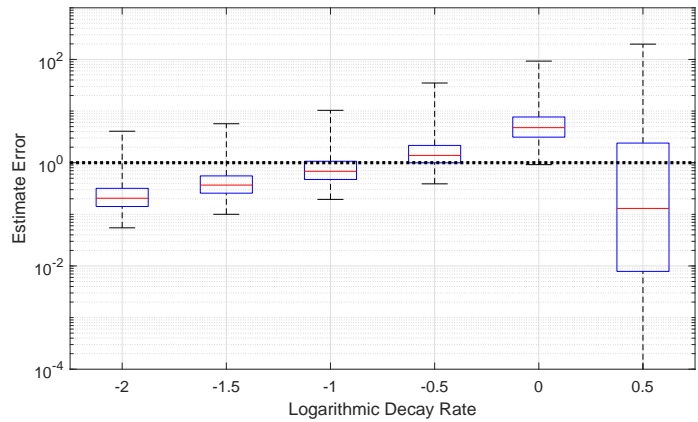}
	\caption{Spread of estimate errors against logarithmic decay rate of the mask ($\log_{10}\varphi$)}
	\label{fig:decay_dependence}
\end{figure}

The decline in privacy as the decay rate decreased was due to the slower convergence of the system as the mask perturbed the opinions longer. This delayed the opinions from converging to within the thresholds, so the eavesdropper had access to more timesteps to extract system information. On the other hand, as the decay rate increased beyond unity, the noise injected by the mask decayed too fast revealing the true trajectories before they had sufficiently converged.


A comparison of the spreads of the estimate error against the centre point of the distribution from which the agent susceptibilities were selected is shown in Figure \ref{fig:sus_dependence}. In this case, the susceptibilities were randomly selected from a uniform distribution of width 0.1 with midpoints ranging from 0.05 to 0.95.

\begin{figure}[ht!]
	\centering
	\includegraphics[width=.6\columnwidth]{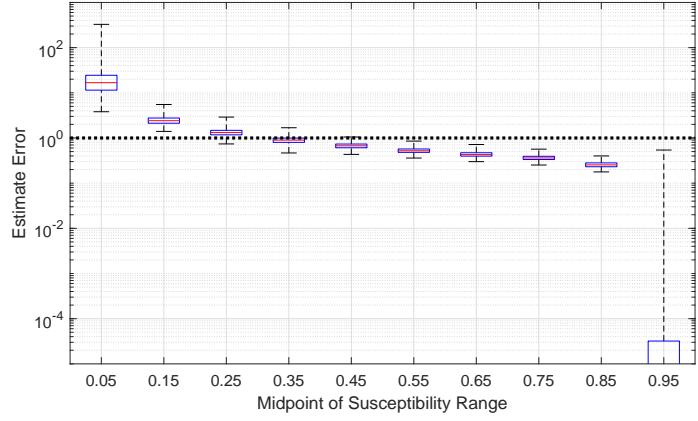}
	\caption{Spread of estimate errors against the susceptibility midpoint}
	\label{fig:sus_dependence}
\end{figure}

The estimate error of the system decreased as the susceptibilities of the agents increased. Additionally, the decrease in privacy occurred at a decreasing rate until the susceptibilities neared a value of 1, where the worst-case estimate error cascaded towards zero. As the susceptibilities decreased, the eigenvalues of $\Lambda W$ grew smaller, hence the unmasked FJ system converged faster. Therefore, the convergence of the FJ system was primarily dominated by the mask for lower susceptibilities, so it was difficult for the eavesdropper to isolate the opinion trends resulting from the true influence structure. As the susceptibilities grew, the convergence of the masked system became increasingly dependent on the underlying influence structure, making it more susceptible to discovery.



Figure \ref{fig:degree_dependence} illustrates that the estimate error increased exponentially with the degree of the system. The largest exponential rate of increase occurred when the degree was less than 15, where the median of the estimate errors rapidly approached 1. Therefore, a larger degree was more beneficial to prevent the discovery of the influence structure.

\begin{figure}[ht!]
	\centering
	\includegraphics[width=.6\columnwidth]{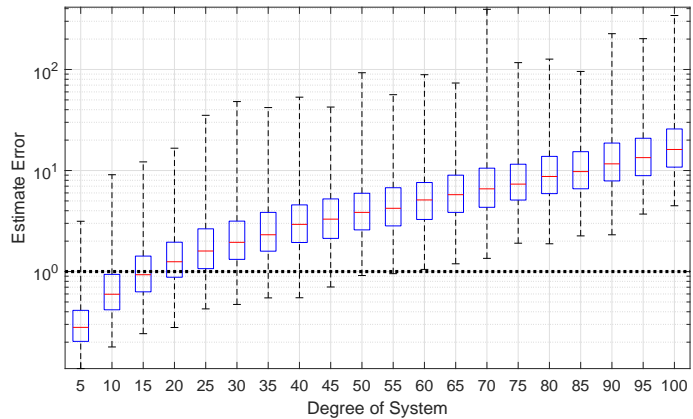}
	\caption{Spread of estimate errors against the degree of the system}
	\label{fig:degree_dependence}
\end{figure}

Figure \ref{fig:degree_dependence} also displays the importance of the availability of the underlying network to the eavesdropper when it performs the maximum likelihood estimate. Without knowledge of the graph, the problem simplification in Section \ref{sec:eavesdropper} where the non-influence elements of $W$ were discarded could not be performed. This is because the influence elements of $W$ are specified by the underlying network. Therefore, the eavesdropper would have no option but to extend the ``degree'' of the estimate of the influence structure from $d$ to $n$. Provided the estimate of the influence structure $\hat w$ was accurate, the entries of $\hat w$ close to zero could be assumed disconnections in the network. However, the number of agents in a distributed system is commonly orders of magnitude larger than the degree of the system, so the estimate error would rapidly increase, rendering the influence structure undiscoverable.

\section{Conclusions} \label{sec:conclusion}
In this paper, we showed that the convergence behaviour of an FJ system was preserved under a much broader set of conditions than what is currently appreciated in literature. Rather than requiring the agents' susceptibilities, biases and influence structure to be constant with respect to time, we demonstrated that provided these parameters converged to their true values, the limiting opinions of the agents were preserved. We used this result to develop a mask for the influence structure of an FJ system. The privacy preserving performance of the mask was dependent on global system parameters, so numerical simulations were required to determine its validity on a case by case basis. Nevertheless, a high degree, concealed underlying network, low convergence tolerance, or low susceptibility range gave strong indicators that the privacy of the influence structure was preserved by the mask. 

In future work, the formulation of the mask could be refined to improve the numerical results or enforce strict bounds on the estimate error. Additionally, other masks for the susceptibilities, external biases, or initial opinions of the agents could be developed using the liberties granted by Theorem \ref{thm:fjtheorem}. The compatibility of these masks could be analysed to determine if they work in unison to preserve the privacy of multiple parameters of an FJ system.



\end{document}